\newtheorem{theorem}{Theorem}[section] 
\newtheorem{lemma}[theorem]{Lemma}     
\newtheorem{corollary}[theorem]{Corollary}
\newtheorem{proposition}[theorem]{Proposition}
\newcommand{\tr}{\, {\rm tr}\,}
\newcommand{\bid}{\, {\bf 1}}
\newcommand{\ZZ}{\mathbb{Z}}
\newcommand{\NN}{\mathbb{N}}
\newcommand{\cF}{\mathcal{F}}
\newcommand{\CC}{\mathbb{C}}
\newcommand{\bE}{\mathbb{E}}
\newcommand{\bP}{\mathbb{P}}
\def\nd0{n_{\Delta^\#}}
\def\Nd0{N_{\Delta^\#}}
\def\cd{C_d}
\DeclareMathOperator{\diam}{diam} 
\title[Lifshitz asymptotics for percolation Hamiltonians]
 {Lifshitz asymptotics for percolation Hamiltonians} 
\author{R. Samavat, P. Stollmann and I. Veseli\'c}
\begin{document}
\maketitle

\begin{abstract}
We study a discrete Laplace operator $\Delta$ on percolation subgraphs of an
infinite graph. The ball volume is assumed to grow at most polynomially. We are interested in 
the behavior of the integrated  density of states near the lower spectral edge.
If the graph is a Cayley graph we prove that it exhibits  Lifshitz tails.
If we merely assume that the graph has an exhausting sequence with positive $\delta$-dimensional density,
we obtain an upper bound on the integrated  density of states of Lifshitz type.

\end{abstract}


\section{Introduction} 
\label{intro}

The present paper is a continuation of works
\cite{KirschM-06,MuSt07,AntunovicV-a08,AntunovicV-09} that deal with spectral asymptotics of a
particular class of random Hamiltonians, introduced in the  following way: Starting from an infinite graph one
considers subgraphs defined by a percolation process, i.e., by deleting sites or
bonds of the graph, respectively, by means of a random process. That leads to a
random family of Laplacians of these subgraphs.

The topic is influenced by two strains of previous development:
on the one hand the study of the distribution of spectrum near its minimum.
This takes oftentimes the form of an exponential asymptotics called \emph{Lifshitz tails}.
While this has been first analysed for random Schr\"odinger operators on continuum space, the 
first result concerned with Hamiltonians on the lattice $\ZZ^d$ is \cite{Simon-b85}, 
which established Lifshitz tails for the so-called \emph{Anderson model}.
If $N(E)$ denotes the integrated density of states at energy $E$ and the Anderson model is normalized
in such a way that zero is the almost sure bottom of the spectrum,  \cite{Simon-b85} states that 
$\lim\limits_{E\searrow 0} \frac{\ln |\ln N(E)|}{|\ln E|} = \frac{d}{2}$.
The value $d/2$ is sometimes called Lifshitz exponent.

In the present paper we focus on a different, albeit related model, namely a \emph{percolation Hamiltonian}, 
sometimes also called \emph{quantum percolation problem}. It can be viewed as a hard-core Anderson model, where
``hard-core" refers to the fact that the potential barrier is impenetrable, i.e. infinitely high.
While classical percolation theory started in 1957 with the seminal papers \cite{BroadbentH-57,Hammersley-57} 
and has developed in one of the most studied and paradigmatic models of statistical mechanics, 
the quantum percolation problem was introduced only shorty later, in 1959, see \cite{deGennesLM-a59,deGennesLM-b59}.
Since then it has been treated both from the theoretical, e.g. in \cite{KirkpatrickE-72,ChayesCFST-86}, 
as well as from the computational physics point of view e.g. in \cite{KantelhardtB-02}.
It seems that the first papers that established basic properties of 
percolation Hamiltonians on the mathematical level of rigour were
\cite{Veselic-a05} and \cite{Veselic-b05}. There also the formulation of the quantum percolation 
problem on Cayley and quasi-transitive graphs was given.

The low energy spectral asymptotics for the quantum percolation problem 
on the integer lattice has been identified in the non-percolating phase in \cite{KirschM-06} 
and subsequently in the percolating phase in \cite{MuSt07}. 
These two papers combined establish a phase-transition 
manifested in the spectral properties of the graph Laplacian of the percolation clusters. 

In \cite{AntunovicV-09}, relying on \cite{AntunovicV-a08}, the results of \cite{KirschM-06} for the non-percolating regime
have been generalized to the setting of amenable Cayley graphs. 
In the present paper we extend these results to the
percolating phase, using an abstract method  \cite{St99}, which was already applied in  \cite{MuSt07}. 
If the volume of the balls in the Cayley graph grow polynomially with exponent $d$, we 
prove Lifshitz asymptotics with Lifshitz exponent $d/2$ as in the case of the $\ZZ^d$ lattice.
More generally, for (possibly) inhomogeneous graphs 
where the upper polynomial volume growth bound is complemented 
by the requirement that arbitrarily large portions of the graph have at least $\delta$-dimensional volume growth, 
we derive an upper bound on the integrated density of states which corresponds to the Lifshitz exponent $\delta/(d+1)$.	
The results hold equally for two types of random Hamiltonians:
the Anderson model and the percolation Hamiltonian.

These kinds of models are surveyed in the papers
\cite{AntunovicV-b08} and \cite{MuSt11}, where also the
relation to geometric group theory as well as the underlying stochastic model is
reviewed. As the reader will notice, our methods do not use any kind of geometric 
or stochastic translation invariance, so that the
upper bound we give holds in situations where neither the integrated density of
states is defined nor where the usual $0$-$1$ laws of percolation theory hold.

\section{Preliminaries}


Let $G=(V,E)$ be a graph, consisting of countable set of vertices $V(G):=V$ and	
the set of edges $E(G):=E$, consisting of sets $\{x,y\} \subseteq V$. Since we
will assume all our graphs to be undirected and without loops and multiple
edges, it is possible and convenient to identify edges with two-element subsets of the set
of vertices in this way. A \emph{path} between $x$ and $y$ is a finite family of
edges connecting $x$ and $y$ in the obvious sense. A \emph{connected graph} is a
graph which has at least one path between each two vertices. The \emph{distance}
between two vertices $x$ and $y$ is the minimum length of paths between them and
is denoted by $d(x,y)$. The ball of radius $r$ at $x$ is a subset of $V$
denoted by $B_r(x)$ which contains all vertices with distance $r$ or less to
$x$. While later we will assume more restrictive conditions, we will now concentrate on the case that 
 \[
 \sup \{\deg_{G}(x) \mid x \in V  \} =:\cd<\infty ,
\] 
which implies, in particular, that the following basic object we look at, the graph Laplacian, is a bounded operator:
\begin{eqnarray*}
\Delta_G&:&\ell^2(V)\to\ell^2(V),\\
\Delta_G f(x)&:=& \sum_{y:\{ x,y\}\in E}(f(x)-f(y)).
\end{eqnarray*}
Note that, according to our sign convention, $\Delta_G$ defines a nonnegative bounded operator on $\ell^2(V)$.

We will be dealing with Laplacians on subgraphs that come in the following way; for a subset $V'\subseteq V$ the \emph{induced subgraph} is given by $G'=(V',E')$ where $E'=\{ e\in E\mid e\subseteq V'\}$ consists of those edges that join two vertices from $V'$.

For such an induced subgraph there are different Laplacians on $\ell^2(V')$ that have been considered: First of all $\Delta_{G'}=:\Delta_{V'}=:\Delta_{V'}^N$ is sometimes called \emph{Neumann Laplacian}. The name is plausible in that $\Delta_{V'}$ has decomposition properties analogous to those of the Laplacian with Neumann boundary conditions in euclidean space, see lemma \ref{lemma1.2} below. We define the \emph{Dirichlet Laplacian} by:

\begin{eqnarray*}
\Delta_{V'}^D&:&\ell^2(V')\to\ell^2(V'),\\
\Delta_{V'}^D f(x)&:=& \sum_{y:\{ x,y\}\in E'}(f(x)-f(y))+(\deg_{G}(x)-\deg_{G'}(x))f(x),
\end{eqnarray*}
so that 
$$
\Delta_{V'}^D = \Delta_{V'}^N + (D_G-D_{G'}) ,
$$
where $D_G$ and $D_{G'}$ denote the multiplication operators with the functions $\deg_{G}$ and $\deg_{G'}$, respectively. Note that the terminology is different at this point (it even differs from the one chosen in \cite{AntunovicV-09},\cite{MuSt07}). Our nomenclature seems to be justified by the following lemma whose straightforward proof we do not work out here. We need some notation:
Let  $Q$ be the form associated with $\Delta_G$, i.e.,
$$
Q(f,g):= (\Delta_G f\mid g)\mbox{  for  }f,g\in \ell^2(V) .
$$
Finally, the indicator function of the set $A$ is denoted by $1_A$.
\begin{lemma}\label{dirichlet}Let $V'\subseteq V$. 
\begin{itemize}
 \item[{\rm (1)}] $\Delta_{V'}^D$ is the unique selfadjoint operator in $\ell^2(V')$ associated to the restriction of $Q$ to $\ell^2(V')$, and the latter restriction in turn equals the closure of $Q$ restricted to $C_c(V')$ (continuity w.r.t. the discrete topology). 
 \item [{\rm (2)}]
$$\Delta_{V'}^D=\lim_n \big(\Delta_G+n1_{V\setminus V'}\big)
$$
in the generalized strong resolvent sense. 
  \item [{\rm (3)}] For the natural inclusion $J_{V'}:\ell^2(V')\to \ell^2(V)$
    $$\Delta_{V'}^D=J_{V'}^*\Delta_G J_{V'} .$$
    Note that $J_{V'}^*$ is nothing but restriction of functions.
\end{itemize}
\end{lemma}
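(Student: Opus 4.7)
The three parts are linked: I would first prove (3), since it is a direct computation, then use it to obtain (1), and finally derive (2) from a monotone convergence argument for quadratic forms.

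For part (3), the plan is a one-line unfolding of definitions. The operator $J_{V'}$ sends $f\in\ell^2(V')$ to its extension by zero in $\ell^2(V)$, and $J_{V'}^*$ is restriction of functions. For $x\in V'$,
$$
(J_{V'}^*\Delta_G J_{V'} f)(x)=\sum_{y:\{x,y\}\in E}\big((J_{V'}f)(x)-(J_{V'}f)(y)\big),
$$
and splitting the sum according to whether $y\in V'$ or $y\in V\setminus V'$ gives exactly
$$
\sum_{y:\{x,y\}\in E'}(f(x)-f(y))+(\deg_G(x)-\deg_{G'}(x))f(x)=(\Delta_{V'}^D f)(x).
$$
No subtlety arises since $\Delta_G$ is bounded so everything is pointwise-defined.

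For part (1), observe that because $\Delta_G$ is bounded, $Q$ is a bounded form on $\ell^2(V)$, its restriction to $\ell^2(V')$ (viewed as a subspace via $J_{V'}$) is a bounded form on $\ell^2(V')$, and the unique selfadjoint operator it represents is characterized by $(Af,g)=Q(J_{V'}f,J_{V'}g)=(J_{V'}^*\Delta_G J_{V'}f,g)$. By (3), $A=\Delta_{V'}^D$. The second assertion of (1) then follows because a bounded quadratic form on $\ell^2(V')$ equals the closure of its restriction to any dense subspace; applying this to $C_c(V')\subset\ell^2(V')$ finishes the argument.

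For part (2), I would invoke the standard monotone convergence theorem for nondecreasing sequences of closed nonnegative quadratic forms (Kato/Simon; Reed--Simon Theorem S.16). Let $Q_n$ be the form of $\Delta_G+n1_{V\setminus V'}$ on $\ell^2(V)$, namely $Q_n(f,f)=Q(f,f)+n\|1_{V\setminus V'}f\|^2$. The sequence $Q_n$ is pointwise nondecreasing in $n$, so the limit form $Q_\infty(f,f)=\lim_n Q_n(f,f)$ is closed, with form domain $\{f\in\ell^2(V):\sup_n Q_n(f,f)<\infty\}=\ell^2(V')$ (since a finite supremum forces $1_{V\setminus V'}f=0$). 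On this domain $Q_\infty(f,f)=Q(f,f)$, and by (1) its associated selfadjoint operator is $\Delta_{V'}^D$, regarded as an operator on the closed subspace $\ell^2(V')\subset\ell^2(V)$ and equal to $+\infty$ on $\ell^2(V\setminus V')$. The theorem then yields strong resolvent convergence in the generalized sense, which is exactly what is claimed.

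The only delicate step is the bookkeeping in (2): making sure ``generalized strong resolvent convergence'' is interpreted as convergence of the resolvents of $\Delta_G+n1_{V\setminus V'}$ on $\ell^2(V)$ to the resolvent of $\Delta_{V'}^D$ composed with the orthogonal projection onto $\ell^2(V')$. Given the boundedness of $\Delta_G$, no domain issues arise, and the form-monotone-convergence theorem handles everything cleanly.
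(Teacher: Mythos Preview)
Your proof is correct. The paper does not actually give a proof of this lemma: it states explicitly that the ``straightforward proof we do not work out here,'' so there is nothing to compare against. Your route---computing (3) directly, deducing (1) from (3) via the form identity $Q(J_{V'}f,J_{V'}g)=(J_{V'}^*\Delta_G J_{V'}f\mid g)$ together with density of $C_c(V')$, and obtaining (2) from the monotone-convergence theorem for nondecreasing closed forms---is exactly the kind of argument the authors presumably had in mind, and each step is sound given that $\Delta_G$ is bounded.
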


\begin{remark}\begin{enumerate}
               \item In many papers, starting with \cite{Simon-b85}, the following operator
               $$\Delta_{V'}^{DD}:= \Delta_{V'}^N + 2(D_G-D_{G'})$$
               is called Dirichlet operator. Since it is bounded below by our Dirichlet Laplacian, the spectral estimates we obtain remain valid for $\Delta_{V'}^{DD}$ as well. The advantage
      of this definition lies in the following property:
      $$\Delta_{V_1 \cup V_2}^{DD} \le \Delta_{V_1}^{DD} \oplus \Delta_{V_2}^{DD}$$
      in the sense of quadratic forms for two disjoint $V_1, V_2 \subseteq V$.
      The similar inequality is not true for what we call the Dirichlet Laplacian due to the lack of locality of those operators.
       \item What we call Dirichlet Laplacian sometimes is called pseudo Dirichlet or Adjacency Laplacian or Laplacian with simple boundary conditions. Part (3) of the preceding
lemma means that our Dirichlet Laplacian can be thought of as just truncating the matrix of Laplacian 
of the ambient big graph.               
               \item For a thorough discussion of Dirichlet boundary conditions in the general Dirichlet form framework see \cite{St93}.
              \end{enumerate}
\end{remark}


The following lemma is again obviously true:

\begin{lemma}\label{lemma1.1} In the sense of quadratic forms we have 
$$0\leq \Delta_{V'}^N \leq \Delta_{V'}^D \leq \Delta_{V'}^{DD} \leq 2 D_G.$$
\end{lemma}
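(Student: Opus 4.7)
The plan is to prove each inequality separately, starting from the rewritings $\Delta_{V'}^D = \Delta_{V'}^N + (D_G - D_{G'})$ and $\Delta_{V'}^{DD} = \Delta_{V'}^N + 2(D_G - D_{G'})$ supplied in the text. The middle two inequalities are the easiest: since $V'$ induces the subgraph $G'$, every vertex $x\in V'$ satisfies $\deg_{G'}(x)\le\deg_G(x)$, so $D_G - D_{G'}\ge 0$ as a multiplication operator on $\ell^2(V')$. Adding a nonnegative multiplication operator to a symmetric operator increases the associated quadratic form pointwise, which gives both $\Delta_{V'}^N\le\Delta_{V'}^D$ and $\Delta_{V'}^D\le\Delta_{V'}^{DD}$ at once.

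For the leftmost inequality $0\le\Delta_{V'}^N$, I would compute the quadratic form of $\Delta_{G'}$ directly: a one-line summation by parts gives
$$
\langle\Delta_{V'}^N f,f\rangle \;=\; \tfrac12\sum_{\{x,y\}\in E'}|f(x)-f(y)|^2 \;\ge\; 0,
$$
so positivity is manifest.

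The rightmost inequality $\Delta_{V'}^{DD}\le 2D_G$ is the only step that uses a genuinely new computation. Writing $\Delta_{V'}^N = D_{G'} - A_{G'}$ with $A_{G'}$ the adjacency operator of $G'$, one gets
$$
2D_G - \Delta_{V'}^{DD} \;=\; 2D_G - \bigl(\Delta_{V'}^N + 2(D_G-D_{G'})\bigr) \;=\; D_{G'} + A_{G'},
$$
the signless Laplacian of $G'$. For this operator I would carry out the dual one-line computation to the one above and obtain the square expansion
$$
\langle (D_{G'}+A_{G'})f,f\rangle \;=\; \sum_{\{x,y\}\in E'}|f(x)+f(y)|^2 \;\ge\; 0,
$$
which is the only nontrivial identity needed. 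The conclusion $\Delta_{V'}^{DD}\le 2D_G$ follows.

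The main (very mild) obstacle is remembering to rewrite $\Delta_{V'}^{DD}$ in the form $2D_G - (D_{G'}+A_{G'})$ so that the signless-Laplacian sum-of-squares identity can be applied; the naive route of trying to deduce $\Delta_{V'}^{DD}\le 2D_G$ from the standard bound $\Delta_G\le 2D_G$ combined with Lemma \ref{dirichlet}(3) only yields $\Delta_{V'}^D\le 2D_G$ and misses the extra $D_G-D_{G'}$ that appears in $\Delta_{V'}^{DD}$. Once this substitution is spotted the rest is routine.
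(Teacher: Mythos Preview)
The paper does not actually supply a proof of this lemma; it simply declares it ``obviously true.'' Your argument is correct and fills in exactly the details one would expect: nonnegativity of $\Delta_{V'}^N$ via the edge-sum-of-squares, the two middle inequalities from $D_G-D_{G'}\ge 0$, and the upper bound via the signless-Laplacian identity $2D_G-\Delta_{V'}^{DD}=D_{G'}+A_{G'}$.

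The only blemish is the stray factor $\tfrac12$ in your formula for $\langle\Delta_{V'}^N f,f\rangle$: since the sum is already taken over unordered edges $\{x,y\}\in E'$, each edge contributes $|f(x)-f(y)|^2$ with coefficient $1$, not $\tfrac12$. This does not affect the positivity conclusion, and your companion identity for $D_{G'}+A_{G'}$ is stated with the correct normalization.
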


It is important to know that the Neumann Laplacians possess a
monotonicity property with respect to disjoint union of vertex sets. The
following lemma is easy to prove.

\begin{lemma}\label{lemma1.2} For two disjoint $V_1, V_2 \subseteq V$ we have
$$\Delta_{V_1}^N \oplus \Delta_{V_2}^N \leq \Delta_{V_1 \cup V_2}^N.$$
\end{lemma}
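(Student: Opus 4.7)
My plan is to prove the inequality at the level of quadratic forms, which in this finite-degree setting is equivalent to the operator inequality and more natural for the graph Laplacian.

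First I would write out the quadratic form of the Neumann Laplacian explicitly. For any $V' \subseteq V$ and any $f \in \ell^2(V')$, one has the standard identity
\begin{equation*}
(\Delta_{V'}^N f \mid f) = \sum_{\{x,y\} \in E(G')} |f(x) - f(y)|^2,
\end{equation*}
where $E(G') = \{e \in E \mid e \subseteq V'\}$ is the edge set of the induced subgraph. This is immediate from the definition of $\Delta_G$ by a summation-by-parts argument, using that $\Delta_{V'}^N = \Delta_{G'}$.

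Next I would exploit the decomposition of edges. Since $V_1$ and $V_2$ are disjoint, every edge $e = \{x,y\}$ with $e \subseteq V_1 \cup V_2$ falls into exactly one of three classes: (i) $e \subseteq V_1$, (ii) $e \subseteq V_2$, or (iii) $x \in V_1, y \in V_2$ (or vice versa). Thus for any $f \in \ell^2(V_1 \cup V_2) \cong \ell^2(V_1) \oplus \ell^2(V_2)$, writing $f = f_1 \oplus f_2$ with $f_i := f|_{V_i}$, I would compute
\begin{equation*}
(\Delta_{V_1 \cup V_2}^N f \mid f) = \sum_{\{x,y\} \subseteq V_1} |f_1(x)-f_1(y)|^2 + \sum_{\{x,y\} \subseteq V_2} |f_2(x)-f_2(y)|^2 + \sum_{\substack{\{x,y\} \in E \\ x \in V_1,\, y \in V_2}} |f(x)-f(y)|^2.
\end{equation*}
The first two sums equal $(\Delta_{V_1}^N f_1 \mid f_1) + (\Delta_{V_2}^N f_2 \mid f_2) = ((\Delta_{V_1}^N \oplus \Delta_{V_2}^N) f \mid f)$, while the third, which collects the contributions of edges running between $V_1$ and $V_2$, is manifestly nonnegative. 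The asserted form inequality follows at once.

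There is no real obstacle here: the whole statement reduces to the observation that taking the union of two vertex sets \emph{adds} the edges crossing between them to the induced edge set, and these additional edges contribute a nonnegative Dirichlet energy. It is worth noting for orientation that the analogous inequality fails for $\Delta^D$ because the Dirichlet degree $\deg_G(x) - \deg_{G'}(x)$ is not additive under union in the same way, which is precisely the ``lack of locality'' already flagged in the preceding remark.
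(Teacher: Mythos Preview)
Your argument is correct and is exactly the standard verification: write the Neumann form as a Dirichlet energy over the induced edge set, split the edges of the induced subgraph on $V_1\cup V_2$ into those inside $V_1$, those inside $V_2$, and the cross edges, and observe that the latter contribute a nonnegative term. The paper does not actually supply a proof---it merely states that the lemma ``is easy to prove''---so there is nothing further to compare; your write-up is precisely the kind of short computation the authors had in mind.
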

Next for given $E\in \mathbb{R}$ we define the eigenvalue counting function as
$$n(E;H)=\text{tr} 1_{(-\infty,E]}(H), $$
for any non-negative, selfadjoint operator $H$ on some Hilbert space. 
Typically, we will be
dealing with Laplacians on finite subgraphs so that the trace in question is
indeed finite. An evident consequence of the previous lemma is 
that the eigenvalue counting function for the Neumann Laplacian is
subadditive in the domain. This is formulated precisely in 
\begin{corollary}\label{cor1.3}
Let $V_1, V_2\subseteq V$ be disjoint. Then
$$ n(E;\Delta^N_{V_1 \cup V_2}) \leq  n(E;\Delta^N_{V_1}) + n(E;\Delta^N_{V_2}).$$
\end{corollary}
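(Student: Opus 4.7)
The plan is to deduce this directly from Lemma \ref{lemma1.2} via the min-max (Courant--Fischer) principle applied to the eigenvalue counting function $n(E; \cdot)$.

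First I would observe that $n(E; H)$, viewed as the dimension of the spectral projection of $H$ onto $(-\infty, E]$, is antitone with respect to the quadratic form order on non-negative self-adjoint operators. Concretely, if $0 \le A \le B$ as quadratic forms (with a common form domain, or after the appropriate form comparison), then for every $k$ the $k$-th min-max eigenvalue of $A$ is at most that of $B$, hence $n(E; B) \le n(E; A)$ for every $E$. Lemma \ref{lemma1.2} gives precisely $\Delta^N_{V_1} \oplus \Delta^N_{V_2} \le \Delta^N_{V_1 \cup V_2}$, so this monotonicity yields
$$
n(E; \Delta^N_{V_1 \cup V_2}) \;\le\; n\bigl(E;\, \Delta^N_{V_1} \oplus \Delta^N_{V_2}\bigr).
$$

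Second, since $V_1$ and $V_2$ are disjoint, $\ell^2(V_1 \cup V_2) = \ell^2(V_1) \oplus \ell^2(V_2)$ as an orthogonal direct sum, and the spectrum of a direct sum operator is the union of the two spectra (with multiplicities). In particular the spectral projection of $\Delta^N_{V_1} \oplus \Delta^N_{V_2}$ onto $(-\infty, E]$ splits as the orthogonal sum of the corresponding spectral projections of the summands, so taking traces gives
$$
n\bigl(E;\, \Delta^N_{V_1} \oplus \Delta^N_{V_2}\bigr) \;=\; n(E; \Delta^N_{V_1}) + n(E; \Delta^N_{V_2}).
$$
Combining the two displayed inequalities proves the corollary.

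There is essentially no main obstacle: the only mild subtlety is making sure the min-max comparison is applied in the correct direction (larger form $\Rightarrow$ larger eigenvalues $\Rightarrow$ smaller counting function), and that the trace in the definition of $n(E; \cdot)$ makes sense, which it does because we are in the bounded setting with $V_1, V_2$ typically finite (and in any case the spectral projection below $E$ is trace class precisely when the counting function is finite, on both sides of the inequality).
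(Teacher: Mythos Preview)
Your argument is correct and is exactly the ``evident consequence'' the paper alludes to: apply the min--max principle to the form inequality of Lemma~\ref{lemma1.2} to get antitonicity of $n(E;\cdot)$, and then use that the counting function of a direct sum is the sum of the counting functions. There is nothing to add.
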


Now we define site percolation on the graph $G$. Fix $p \in [0,1]$ and for $x
\in V$ let $\omega_x$ be a Bernoulli random variable taking the value 1 with
probability $p$ and value 0 with probability $1-p$. We assume the $(\omega_x)_{x
\in V}$ to be independent and  denote the set of all possible configurations
$\omega =(\omega_x)_{x \in V}$  by $\Omega$ and the corresponding product
measure by $\mathbb{P}$ or $\mathbb{P}_p$, if we want to stress the dependence on the parameter $p$.
For a given configuration $\omega$ we call 
$$V_{\omega} = \{ x \in V | \omega_x =1 \}$$ 
the set of open sites, and define 
$$E_\omega = \{ \{x,y\}\in E(G) | x,y \in V_\omega \}.$$ 
The subgraph of $G_\omega=(V_\omega,E_\omega)$ is called a percolation subgraph.
It is the subgraph of $G$ induced by the subset $V_\omega$ of the vertex set which is formed by the open sites. 

We now introduce the central random objects of the present paper. Given a percolation process
$(\Omega, \bP)$ corresponding to $p\in [0,1]$ as above, we define
$$
\Delta^D_\omega := \Delta^D_{V_\omega} \mbox{  for  }\omega=(\omega_x)_{x\in
V}\in \Omega 
$$ 
This gives a random family of operators on the respective spaces
$\ell^2(V_\omega)$.

Moreover, we define a random Schr\"odinger operator on $\ell^2(V)$ by
$$
h_\omega= \Delta_{G} + W_\omega, 
$$
where $W_\omega$ stands for multiplication with the function $1-\omega_x$ (appropriately restricted to the set of vertices considered, if necessary). 

\section{Upper spectral bounds for uniformly polynomially bounded graphs}
This section contains the heart of our results. We show that the distribution of
low lying eigenvalues is very sparse with high probability for a percolation
subgraph of a given uniformly polynomially bounded graph $G=(V,E)$. The latter condition only enters later, see \eqref{polbound}, so for now we just keep the assumptions from the preceding section.

We will look at restrictions of the Dirichlet Laplacian to finite
subsets of the random graph. In fact, due to the fact that Dirichlet boundary conditions are stronger than the random potential, we may rather look at restrictions of $h_\omega$, where we use Neumann boundary conditions. Here are the
details: For $U\subseteq V$ define
\begin{equation}\label{definition-our operator}
h^U_\omega:=\Delta_U^N +W_\omega \bid_U .
\end{equation}

Since we are interested in the low lying spectrum, it is important to note that 
all types of Hamilton operators introduced so far are non-negative operators. For finite $G'$, the graph Laplacian
$\Delta_{G'}=\Delta_{G'}^N$ has zero in its spectrum and the multiplicity 
is equal to the number of connected components of $G'$.
For any finite rank operator $H$ denote by $E_1(H)$ its lowest eigenvalue.
The second eigenvalue $ \vartheta_U:=E_2(\Delta_U)$ of the graph Laplacian $\Delta_U$
of some $U$ finite subset of $V$ will play the role of a spectral gap in the sequel. It is positive if and only if $U$ is connected, a condition we keep for the sequel.
For the statement of our first theorem, 
recall that $p$ denotes the success probability of the Bernoulli random variables $\omega_x$.

\begin{theorem}\label{th1}
Fix $\alpha \in (0,1-p)$. For $E\leq \frac{\alpha^2}{42} \, \vartheta_U$:
\begin{equation}
 \label{proba}
\bP\{ E_1(h^U_\omega)\le E\} \le \exp [-\gamma | U|] .
\end{equation}
where 
$\gamma:= (1-p-\alpha)^2$. 
\end{theorem}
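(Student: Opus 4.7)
\medskip

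The plan is to reduce \eqref{proba} to a one-sided Bernoulli large-deviation estimate for the number of closed sites
\[
|A|:=\#\{x\in U:\omega_x=0\}.
\]
The deterministic heart of the proof will be the inclusion
\[
\{E_1(h^U_\omega)\le E\}\ \subseteq\ \{|A|\le\alpha|U|\}\qquad\text{whenever }E\le\alpha^2\vartheta_U/42.
\]
Once this is in hand, the theorem follows because $|A|$ is a sum of $|U|$ i.i.d.\ Bernoulli$(1-p)$ random variables, so $\{|A|\le\alpha|U|\}$ is a lower-tail deviation of size at least $(1-p-\alpha)|U|$ below the mean $(1-p)|U|$, and a Chernoff/Hoeffding-type bound then gives $\mathbb P\{|A|\le\alpha|U|\}\le\exp[-(1-p-\alpha)^2|U|]$.

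For the deterministic step I would pick a normalized eigenvector $f$ with $\langle f,h^U_\omega f\rangle=E_1(h^U_\omega)\le E$ and split the non-negative kinetic and potential energies:
\[
\langle f,\Delta_U^N f\rangle\le E,\qquad \|f\cdot\bid_A\|^2=\sum_{x\in A}|f(x)|^2\le E.
\]
Since $U$ is connected, $\Delta_U^N$ has a one-dimensional kernel spanned by the flat vector $f_0:=|U|^{-1/2}\bid_U$, and by the definition of $\vartheta_U$ its restriction to $f_0^\perp$ is $\ge\vartheta_U$. Writing $f=\alpha_0 f_0+f_\perp$ with $f_\perp\perp f_0$, the kinetic inequality becomes
\[
\|f_\perp\|^2\le E/\vartheta_U,\qquad |\alpha_0|^2\ge 1-E/\vartheta_U,
\]
so $f$ is quantitatively close in $\ell^2(U)$ to the flat vector. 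Combining this with the potential inequality via the reverse-triangle inequality on $\ell^2(A)$,
\[
\sqrt E\ge\|f\cdot\bid_A\|\ge|\alpha_0|\,\|f_0\cdot\bid_A\|-\|f_\perp\cdot\bid_A\|\ge|\alpha_0|\sqrt{|A|/|U|}-\|f_\perp\|,
\]
and isolating the geometric information yields
\[
\sqrt{|A|/|U|}\ \le\ \frac{\sqrt E+\sqrt{E/\vartheta_U}}{\sqrt{1-E/\vartheta_U}}.
\]
Plugging in the hypothesis $E\le\alpha^2\vartheta_U/42$ and writing $\sqrt E=\sqrt{\vartheta_U}\sqrt{E/\vartheta_U}$ to keep track of the $\vartheta_U$-dependence, a routine computation bounds the right-hand side by $\sqrt\alpha$, i.e.\ $|A|\le\alpha|U|$, as desired.

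The explicit constant $1/42$ has no deep meaning; it is precisely what is needed so that, after the last display is squared and the elementary factors $\sqrt{1-E/\vartheta_U}$ and $(\sqrt{\vartheta_U}+1)^2$ are accounted for, one gets a clean bound depending only on $\alpha$. This is also where I expect the main obstacle to lie: keeping the final bound on $|A|/|U|$ genuinely a function of $\alpha$ alone, with no residual dependence on the spectral gap $\vartheta_U$ (which may be very small for large connected $U$) or on the global degree bound $C_d$. The probabilistic step, by contrast, is essentially off the shelf: once the inclusion $\{E_1(h^U_\omega)\le E\}\subseteq\{|A|\le\alpha|U|\}$ is established, the Chernoff/Hoeffding estimate for Binomial$(|U|,1-p)$ immediately delivers the exponential bound $\exp[-(1-p-\alpha)^2|U|]$ and completes the proof of \eqref{proba}.
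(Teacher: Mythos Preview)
Your probabilistic reduction is exactly right and matches the paper: once one has the inclusion $\{E_1(h^U_\omega)\le E\}\subseteq\{|A|/|U|\le\alpha\}$, Hoeffding finishes the job. The paper also ends this way, after observing via Feynman--Hellmann that $|A|/|U|=E_1'(0,\omega)$ for the family $h(t)=\Delta_U+tW_\omega$.

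The gap is in your deterministic step, and you have in fact put your finger on it yourself. Carrying out your ``routine computation'' with $t:=E/\vartheta_U\le\alpha^2/42$ gives
\[
\frac{|A|}{|U|}\ \le\ \frac{(\sqrt E+\sqrt{E/\vartheta_U})^2}{1-E/\vartheta_U}
=\frac{t(\sqrt{\vartheta_U}+1)^2}{1-t}
\le\frac{\alpha^2(\sqrt{\vartheta_U}+1)^2}{42-\alpha^2},
\]
and the factor $(\sqrt{\vartheta_U}+1)^2$ does \emph{not} cancel: for $\vartheta_U$ of order the maximum degree and $\alpha$ near $1$ the right-hand side exceeds $\alpha$. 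Your Temple-type estimate controls $|A|/|U|$ through both $E$ and $E/\vartheta_U$ separately, whereas the theorem as stated needs a bound in terms of the ratio $E/\vartheta_U$ alone. So your argument proves a weaker statement, with $\alpha^2/42$ replaced by a constant depending on $\vartheta_U$ (or on the degree bound $C_d$); this suffices for the later applications, where $\vartheta_U\le 1$, but not for the theorem as written.

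The paper closes this gap by a genuinely different deterministic step. Rather than work with the eigenvector of $h^U_\omega$, it uses holomorphic perturbation theory for the family $t\mapsto h(t)$: since $\|W_\omega\|\le1$, the branch $E_1(\cdot)$ is analytic on $\{|z|<\vartheta_U/2\}$, and a Cauchy estimate on a contour of radius $\vartheta_U/4$ yields $|E_1''|\le16/\vartheta_U$ on $[0,\vartheta_U/8]$. With $E_1(0)=0$, Taylor and the AM--GM inequality then give $E_1'(0)\le\sqrt{32\,E_1(1)/\vartheta_U}\le\sqrt{32E/\vartheta_U}$, a bound depending only on $E/\vartheta_U$; plugging in $E\le\alpha^2\vartheta_U/42$ gives $|A|/|U|=E_1'(0)\le\alpha$ with no residual $\vartheta_U$.
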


We will prove this theorem in two steps: an analytic and a probabilistic argument. 
The first one is purely deterministic 
and relies on an analytic perturbation method as developed in \cite{St99}.
For this purpose we define an auxiliary operator
$$h_{\omega}^U(t)=\Delta_U + t W_{\omega}
\quad \text{ for } t \in [0,1].
$$
Since $W_{\omega}$ is non-negative we have for all $t \in [0,1]$ 
\begin{equation}
\label{eq:ordering}
h_{\omega}^U(t) \leq  \Delta_U + W_{\omega} = h_\omega^U 
\end{equation}
Now the next lemma relates the position of the lowest eigenvalue $E_1(t,\omega):= E_1(h_{\omega}^U(t))$ to the derivative 
$E_1^{'}(t,\omega)=\frac{d}{dt}E_1(t,\omega)$ w.r.t. the parameter $t$ at $t=0$.
\begin{lemma}\label{lemmaalpha}
Fix $\alpha \in (0,1]$.
For $E\leq \frac{\alpha^2}{42} \, \vartheta_U$:
\begin{equation}
\label{probdif}
E_1(h^U_\omega)\le E \Longrightarrow E_1^{'}(0,\omega)\le \alpha
\end{equation}
\end{lemma}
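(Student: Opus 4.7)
The plan is to follow the analytic perturbation method of \cite{St99}, combining Feynman--Hellmann with a spectral-gap estimate.

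First I would identify $E_1'(0,\omega)$ explicitly. Since $U$ is connected, $\Delta_U$ has $0$ as a simple lowest eigenvalue with normalized eigenfunction $\psi_0 := |U|^{-1/2}\mathbf{1}_U$, and the spectral gap is $\vartheta_U$. Feynman--Hellmann applied to the affine family $t \mapsto h_\omega^U(t) = \Delta_U + tW_\omega$ gives
$$E_1'(0,\omega) = \langle \psi_0, W_\omega \psi_0 \rangle = \frac{1}{|U|}\sum_{x\in U}(1-\omega_x) =: \eta.$$
The target thus becomes: $E_1(h_\omega^U) \le E \le \frac{\alpha^2}{42}\vartheta_U$ forces $\eta \le \alpha$.

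The main step analyzes the normalized ground state $\psi$ of $h_\omega^U$ at $t=1$, with eigenvalue $E_1\le E$. I would decompose $\psi = c\psi_0 + \phi$ with $\phi \perp \psi_0$ and $|c|^2 + \|\phi\|^2 = 1$. The spectral gap of $\Delta_U$ together with $W_\omega \ge 0$ yields $\vartheta_U \|\phi\|^2 \le \langle \phi, \Delta_U\phi\rangle \le E_1 \le E$, so $\|\phi\|^2 \le E/\vartheta_U \le \alpha^2/42$. Testing the eigenvalue equation $(\Delta_U + W_\omega)\psi = E_1\psi$ against $\psi_0$ (using $\Delta_U\psi_0 = 0$ and $\langle\psi_0,\phi\rangle = 0$) gives $c\eta + \langle \psi_0, W_\omega \phi\rangle = E_1 c$. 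Since $W_\omega$ is an orthogonal projection onto $\ell^2(U\setminus V_\omega)$, $\|W_\omega\psi_0\|^2 = \eta$, and Cauchy--Schwarz yields
$$|c|\,|\eta - E_1| \le \sqrt{\eta}\,\|\phi\|.$$

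Squaring and using $|c|^2 = 1-\|\phi\|^2$ gives the key inequality $(\eta - E_1)^2(1-\|\phi\|^2) \le \eta \|\phi\|^2$, into which I would substitute the gap bound $\|\phi\|^2 \le E/\vartheta_U$. After dispensing with the trivial subcase $\eta \le E_1 \le E$, what remains is a quadratic inequality in $\sqrt{\eta}$; its positive root produces an explicit bound of the form $\sqrt{\eta} \le C_1\sqrt{E/\vartheta_U} + C_2\sqrt{E_1}$ with universal constants. Combined with $E \le \alpha^2\vartheta_U/42$, $\alpha \le 1$, and $\|\phi\|^2 \le 1/42$ (so that $1-\|\phi\|^2 \ge 41/42$ keeps the denominator harmless), this should yield $\eta \le \alpha$. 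The main obstacle is precisely this last algebraic step: tracking the explicit constants through the quadratic carefully enough to confirm that the numerical value $42$ in the hypothesis is large enough to absorb the two Cauchy--Schwarz losses---this is exactly what pins down the specific constant appearing in the statement.
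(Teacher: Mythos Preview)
Your route is genuinely different from the paper's. The paper never decomposes the ground state of $h_\omega^U$; instead it treats $t\mapsto E_1(t)$ as a holomorphic function on the disc $\{|z|<\tfrac12\vartheta_U\}$, bounds $|E_1''|$ on $[0,\tfrac18\vartheta_U]$ by a Cauchy estimate of order $1/\vartheta_U$, and combines the resulting Taylor inequality $E_1'(0)\le \tfrac{8t}{\vartheta_U}+\tfrac{E_1(t)}{t}$ with $E_1(t)\le E_1(h_\omega^U)\le E$ (for $t\le 1$) and AM--GM to obtain $E_1'(0)\le\sqrt{32\,E/\vartheta_U}$. The essential feature of this bound is that it depends only on the \emph{ratio} $E/\vartheta_U$, so that $E\le\tfrac{\alpha^2}{42}\vartheta_U$ gives $E_1'(0)\le\alpha$ immediately.

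Your argument is correct and quite clean up to and including the quadratic inequality $(1-\|\phi\|^2)(\eta-E_1)^2\le\eta\,\|\phi\|^2$, but the final step does not close as you describe it. Solving the quadratic yields $\sqrt{\eta}\le C_1\sqrt{E/\vartheta_U}+C_2\sqrt{E_1}$, and while the first term is controlled by $E/\vartheta_U\le\alpha^2/42$, the hypothesis gives \emph{no absolute bound on $E_1$}: it says only $E_1\le E\le\tfrac{\alpha^2}{42}\vartheta_U$, hence $\sqrt{E_1}\le\alpha\sqrt{\vartheta_U/42}$, which is not small when $\vartheta_U$ is large. For a concrete obstruction within your own framework, take $W_\omega\equiv 1$; then $\psi=\psi_0$, $\phi=0$, $\eta=E_1=1$, your key inequality reads $0\le 0$, and whenever $\vartheta_U\ge 42/\alpha^2$ the hypothesis $E_1\le\tfrac{\alpha^2}{42}\vartheta_U$ holds while $\eta=1$. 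What your Temple-type estimate delivers is control of $\eta-E_1$ by the gap, not control of $\eta$ itself; the missing cancellation is exactly what the paper's second-derivative Cauchy bound supplies, since the factor $1/\vartheta_U$ in $|E_1''|$ offsets the size of $E$ after optimizing in $t$ and leaves only $E/\vartheta_U$. To make your approach go through you would need an additional mechanism that converts smallness of $E_1/\vartheta_U$ into smallness of $\eta$, which the eigenvector decomposition alone does not provide.
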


\begin{proof}
We use holomorphic perturbation theory, using  $\vartheta_U>0$. Since $\omega$ is fixed, we mostly suppress it in the notation. Consider the real analytic operator function $h(t):=h_{\omega}^U(t)= \Delta_U + t W_{\omega}$. As $\|W_\omega\| \le 1$ we know that $E_1(\cdot):=E_1(\cdot,\omega)$ extends to a holomorphic function on the open disc $\{ z\in\CC\mid |z| < \frac12\vartheta_U\}$. We can use this to bound the second derivative $E_1''$ on the interval $[0,\frac18\vartheta_U]$ by
$$
\max_{s\in [0,\frac18\vartheta_U]}| E_1''(s)|\le \frac{16}{\vartheta_U},
$$
using the Cauchy integral formula with the a circle of radius $\frac14\vartheta_U$ as a contour. Since $E_1(0)=0$, Taylor's formula gives 
$$
| E_1(t)-tE_1'(0)|\le \frac{8}{\vartheta_U}t^2\mbox{  for all  }t\in [0,\frac18\vartheta_U] .
$$
Rearranging yields
$$
E_1'(0)\le \min_{t\in (0,\frac18\vartheta_U]} \big(\frac{8t}{\vartheta_U}+\frac{E_1(t)}{t}\big)
$$
and the inequality of arithmetic and geometric means gives us that the minimum should be $\sqrt{\frac{32E_1(t)}{\vartheta_U}}$. Now an application of Min-max-principle implies $E_1(t)\leq E_1(h_\omega^U)$ so
$$E_1'(0)\leq \sqrt{\frac{32E}{\vartheta_U}}\leq \alpha,$$
which gives the assertion
\end{proof}

The proof of theorem  \ref{th1} will be finished by probabilistic arguments.
The main tool for this will be a large deviations or concentration inequality in a particularly simple case:
\begin{remark}\label{hoeffding}Let $I$ be a finite set and $X_i,i\in I$ be i.i.d Bernoulli variables with success probability $q\in (0,1)$. 
Then, for $\alpha\in (0,q)$,
$$
\bP\{ \frac{1}{|I|}\sum_{i\in I}X_i\le \alpha\}\le \exp[-(q-\alpha)^2|I|] .
$$
\end{remark}

This is an immediate consequence of Hoeffding's bound, \cite{Hoeffding-63}; 
it could also be deduced from Chernoff's \cite{Chernoff-52} and was certainly known much earlier. Today one can even find it in Wikipedia :) 

\begin{proof}[of theorem~{\rm\ref{th1}}]
By the Feynman-Hellmann formula (\cite{St01} theorem 4.1.29), we have 
$$
E'_1(0,\omega)= (W_{\omega} \varphi_0 \mid \varphi_0), 
$$
where $\varphi_0=\dfrac{1}{|U|^{\frac{1}{2}}}$ is the normalized 
ground state of $\Delta_U$, so that
$$
\bP[E'_1(0,\omega) \leq \alpha] = \bP[\dfrac{1}{|U|} \sum_{x \in U}(1- \omega_x) \leq \alpha].
$$
Applying the previous remark as well as the previous lemma we get: 

For $E\leq \frac{\alpha^2}{42} \, \vartheta_U$:
$$
\bP[E_1(h^U_\omega)\le E]\le 
\bP[E'_1(0,\omega) \leq \alpha] \le \exp[-(1-p-\alpha)^2|U|] .
$$
\end{proof} 
Now we are heading towards the announced estimate for the spectral distribution function.
The above mentioned uniform polynomial bound
means that there exists constants $d\ge 0$ and $b_d > 0$  such that 
\begin{equation}\label{polbound}
 |B_r(x)| \leq b_d r^d \text{ for all $r \in \mathbb{N}$ and $x \in V$ }  
\end{equation}
Here $|.|$ is the \emph{volume} of the respective set, i.e., its cardinality.
In particular, the vertex degree is uniformly bounded
\[
 \sup \{\deg_{G}(x) \mid x \in V  \} =\cd \leq b_d -1.
\]

Moreover, we assume that $G$ is connected. This doesn't pose a real restriction: if the graph consists of several connected components, all operators we consider decompose and the results we have apply to the individual terms in the sum.

Note that
this setting is much too general to assure the existence of the integrated
density of states (IDS). (For a definition of the IDS in an appropriate setting,  
see \cite{Veselic-b05}, \cite{AntunovicV-b08}, or \cite{MuSt11}.)

Therefore we will work with quantities that are going
to agree with the IDS in case the latter exists. At the same time, our setting
is too general to allow for the usual $0$-$1$-laws of percolation theory, since there
is no translation invariance involved in our assumption. In nice spatially
homogenous graphs the IDS is the limit of eigenvalue counting functions of finite rank operators.

Let us now formalize the notion of integrated density of states or spectral distribution function, 
mentioned in the introduction, in a way adapted to our situation:
For a sequence $\cF=(F_j)_{j\in\NN}$  of finite subsets we call
\begin{equation}
\bar{N}_{\cF}(E,h):= \limsup \frac{1}{| F_j|}\bE\{
\tr\chi_{(-\infty,E]}(h_\omega^{F_j})\}\label{meanspectral}
\end{equation}
the \emph{mean spectral density bound along} $\cF$. The analogous quantity for the Dirichlet Laplacian is defined as follows. First we set
\begin{equation}
\label{eq:percolationHamiltonian}
H^F_\omega:= \Delta_F+(D_G-D_{G_\omega})\bid_F,
\end{equation}
which amounts to setting Dirichlet boundary conditions at the randomly removed vertices and Neumann boundary conditions at the other vertices in $V\setminus F$. Now,
$$
\bar{N}_{\cF}(E,H):= \limsup \frac{1}{| F_j|}\bE\{
\tr\chi_{(-\infty,E]}(H_\omega^{F_j})\} .
$$

If the graph $G$ is quasi-transitive and the sequence $\cF$ exhausts
$V$ in a suitable manner, the latter limit will agree with the IDS for the
Dirichlet Laplacian $\Delta^D_\omega$, see Section \ref{s:Cayley} below. Clearly, the former $\bar{N}_{\cF}(E,h)$ coincides with the IDS of the random Schr\"odinger operator $h_\omega$ and
$$
\bar{N}_{\cF}(E,H)\le \bar{N}_{\cF}(E,h)
$$ 
due to $h^F_\omega\le H^F_\omega$.

We say that $\cF$ is a sequence of $\delta$-\emph{dimensional density} $\eta >0$ if
\begin{equation}
\label{density}
\forall r\ge 0:\quad \frac{| \{ x\in F_n\mid |B_r^{F_n}(x)| 
\ge \eta  r^\delta\}|}{| F_n|}\to 1\mbox{  as  } n\to\infty .
 \end{equation}
Here $B_r^{F_n}(x)$ refers to the ball in the subgraph induced by $F_n$. 
Quite clearly, this condition means that most of the set is not too thin, 
compared with finite dimensional lattices. 
This condition is satisfied in many situations. 
In fact, for groups of polynomial growth it is a fundamental fact 
\cite{Gromov-81,vandenDriesW-84,Pansu-83} 
that balls with increasing radii have asymptotically matching upper and lower polynomial 
volume bounds. Consequently, they form a F\o{}lner sequence $(F_n)_n$ and 
satisfy conditions \eqref{polbound} and \eqref{density} with $\delta=d$.
\begin{remark} Note that condition \eqref{density} implies that $| F_n|\to\infty$ as $n\to\infty$. 
Moreover, \eqref{density} will hold with $\delta=1$ and $\eta=1$ if we take connected $F_n$ with $\diam (F_n)\to \infty$ as $n\to\infty$.
\end{remark}

\begin{theorem}
\label{thmspectralcount}
Assume that $G$ satisfies a uniform polynomial growth bound as 
in {\rm (\ref{polbound})} and let $\cF$  be a sequence of $\delta$-dimensional 
density $\eta$. Fix $\alpha \in (0,1-p)$. There is $\gamma(\alpha)>0$ such that for $E\leq \frac{\alpha^2}{42}$:
$$ 
\bar{N}_{\cF}(E,h)\le \exp[-\gamma(\alpha) E^{-\frac{\delta}{d+1}}] .
$$
\end{theorem}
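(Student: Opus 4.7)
The plan is to reduce Theorem \ref{thmspectralcount} to Theorem \ref{th1} by decomposing each $F_j$ into disjoint connected pieces on which Corollary \ref{cor1.3} applies, with the block scale $r$ tuned so that Theorem \ref{th1}'s spectral-gap hypothesis is met on every piece.

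First I would establish a spectral gap bound for balls. A standard telescoping/Poincar\'e argument yields, for any connected finite $U \subseteq V$, $\vartheta_U \ge 2/(|U| \cdot \diam(U))$: expand $\|f - \bar f\|^2 = \frac{1}{2|U|}\sum_{x,y}|f(x)-f(y)|^2$ and bound $|f(x)-f(y)|^2 \le \diam(U) \sum_{e}|\nabla f(e)|^2$ by summing squared differences along a geodesic from $x$ to $y$. For $U$ connected and contained in a ball $B_r(y)$, the polynomial volume bound (\ref{polbound}) gives $|U| \le b_d r^d$ and $\diam(U) \le 2r$, so $\vartheta_U \ge c/r^{d+1}$ for a constant $c$ depending only on $b_d$. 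Choosing $r$ as the largest integer with $(\alpha^2/42)\vartheta_U \ge E$, explicitly $r \sim E^{-1/(d+1)}$, makes the hypothesis of Theorem \ref{th1} hold on every such block.

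Second I would build a disjoint partition $F_j = R_j \sqcup \bigsqcup_i U_i^{(j)}$ in which each $U_i^{(j)}$ is connected and contained in some $B_r^{F_j}(y_i)$. A natural route: fix a maximal $r$-separated set of centers $\{y_i\} \subseteq F_j$, form the induced Voronoi cells, and keep only the connected component of $y_i$ inside its cell, so that $B_{\lfloor r/2\rfloor}^{F_j}(y_i) \subseteq U_i^{(j)} \subseteq B_r^{F_j}(y_i)$. Call $U_i^{(j)}$ \emph{big} if $|U_i^{(j)}| \ge \eta (r/2)^\delta$. The density hypothesis (\ref{density}) applied at radius $r/2$ ensures that, for $r$ fixed, the fraction of centers which are ``$(r/2)$-good'' tends to $1$ as $j \to \infty$, and each such center produces a big piece since its main component already contains $B_{\lfloor r/2\rfloor}^{F_j}(y_i)$. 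Corollary \ref{cor1.3}, the elementary bound $\bE[n(E, h^U_\omega)] \le |U|\,\bP[E_1(h^U_\omega) \le E]$, and Theorem \ref{th1} then combine to give
\[
\bE\!\big[n(E, h^{F_j}_\omega)\big] \le |F_j|\, e^{-\gamma(\alpha)\eta (r/2)^\delta} + \sum_{i\text{ small}} |U_i^{(j)}| + |R_j|,
\]
with $\gamma(\alpha) = (1-p-\alpha)^2$. The last two terms are $o(|F_j|)$ as $j \to \infty$ (for $r$ fixed) by (\ref{density}). Dividing by $|F_j|$, taking $\limsup$ and substituting $r \sim E^{-1/(d+1)}$ produces $\bar{N}_{\cF}(E,h) \le \exp[-\gamma(\alpha)\eta (r/2)^\delta] \le \exp[-\gamma'(\alpha)E^{-\delta/(d+1)}]$ for a suitable constant $\gamma'(\alpha)>0$.

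The hard part will be the partition step: Voronoi cells in an arbitrary graph need not be connected, and the total mass of the discarded non-main components plus the remainder $R_j$ has to be shown to be $o(|F_j|)$ using only (\ref{density}). A safer fallback would be to work with disjoint balls $B_{\lfloor r/2\rfloor}^{F_j}(y_i)$ around $(r/2)$-good centers chosen by greedy packing, lumping everything else into $R_j$, and controlling $|R_j|/|F_j|$ via (\ref{density}) at a slightly smaller radius; this would alter only constants, not the Lifshitz exponent $\delta/(d+1)$.
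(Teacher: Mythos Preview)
Your approach is essentially the paper's: the same Cheeger/diameter spectral-gap bound, the same choice of $r\sim E^{-1/(d+1)}$, a Vorono\"{i} decomposition of $F_j$ at scale $r$, Neumann decoupling, and Theorem~\ref{th1} on each cell, with the density hypothesis~\eqref{density} killing the contribution of the ``thin'' cells.

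The only substantive difference is that you flag connectedness of the Vorono\"{i} cells as the hard part and introduce a remainder set $R_j$ (or a fallback greedy packing) to cope with it. This complication is unnecessary. If you define the cells via the \emph{intrinsic} graph distance $d_F$ in $F$ and break ties lexicographically in the index of the center, i.e.\ let
\[
V_i=\{x\in F: d_F(x_i,x)\le d_F(x_j,x)\ \text{for all }j>i,\ d_F(x_i,x)<d_F(x_j,x)\ \text{for all }j<i\},
\]
then each $V_i$ is star-shaped around $x_i$: along any $d_F$-geodesic from $x_i$ to $x\in V_i$, the distance to $x_i$ drops by one at each step while the distance to every other center drops by at most one, so all intermediate vertices remain in $V_i$. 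Hence the $V_i$ are already connected, pairwise disjoint, and cover $F$ exactly, with $B_{r/2}^{F}(x_i)\subseteq V_i\subseteq B_r^{F}(x_i)$. No remainder $R_j$ appears, and the ``small'' cells are simply those whose center $x_i$ fails~\eqref{density} at radius $r/2$; since such centers are themselves bad points, their number is $o(|F_j|)$ for fixed $r$, and each such cell has volume at most $b_d r^d$, so their total contribution to $\bE[n(E,h_\omega^{F_j})]/|F_j|$ vanishes in the $\limsup$. The rest of your argument then goes through verbatim.
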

According to the assertion, the constant $\gamma(\alpha)$ may and will depend on the geometric properties. 
The proof will show that we can choose
$$
\gamma(\alpha)= \eta 2^{-\delta} \left(\frac{\alpha^2}{84b_d}\right)^{\frac{\delta}{d+1}}(1-p-\alpha)^2 .
$$

Let us outline the main idea used in the proof of the above theorem. We can bound
$$
\bE[\tr\chi_{(-\infty,E]}(h_\omega^{F})]\le \bP[E_1(h^F_\omega)\le E]\cdot | F|$$
but we cannot apply our estimate \eqref{proba} from theorem \ref{th1} directly to the members of the sequence $\cF=(F_j)_{j\in\NN}$: since the $| F_j|\to\infty$ we will have an energy range (depending on $\vartheta_{F_n}\to 0$) that collapses to $E=0$. To avoid this complication, we decompose each $F=F_j$ into a collection of sets of approximately equal size and apply theorem \ref{th1} to each of these sets. Our first step is such a decomposition based on a Vorono\"{i} type construction:

\begin{proposition}\label{propo}
  Assume that $F\subseteq V$ is finite. For any $r\ge 0$ there exist $x_1, ... , x_m\in F$ and pairwise disjoint, connected subsets $V_1, ... ,V_m$ of $F$ so that
$$
B_{\frac{r}{2}}^F(x_k)\subseteq V_k\subseteq B_{r}^F(x_k)\mbox{  and  }\bigcup_{k=1}^m V_k=F .
$$  

\end{proposition}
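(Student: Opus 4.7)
My plan is a two-step Voronoi-type construction. Since the statement allows $F$ to be disconnected, I first reduce to the connected case by applying the construction separately to each connected component of the subgraph induced by $F$ and taking the union of the resulting partitions. Throughout, write $d_F$ for the graph distance in the subgraph induced by $F$, so that $B_r^F(x)=\{y\in F:d_F(x,y)\le r\}$ matches the notation already used above.

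\emph{Step 1 (choice of centers).} Select $\{x_1,\dots,x_m\}\subseteq F$ greedily as a maximal subset satisfying the separation property $d_F(x_i,x_j)>r$ for all $i\ne j$. Maximality then forces the covering property $\min_k d_F(x,x_k)\le r$ for every $x\in F$.

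\emph{Step 2 (growing the cells).} Grow the $V_k$ by a simultaneous breadth-first search from the centers. Order the vertices of $F$ by $\tau(x):=\min_k d_F(x,x_k)$ (with ties broken arbitrarily), initialise $V_k=\{x_k\}$, and process vertices in increasing order of $\tau$. For each $x$ with $\tau(x)=t\ge 1$, pick any index $\ell$ with $d_F(x,x_\ell)=t$ and any neighbour $y$ of $x$ in $F$ lying on a shortest $x$-$x_\ell$ path (so $d_F(y,x_\ell)=t-1$, which forces $\tau(y)=t-1$ and hence $y$ is already processed), and assign $x$ to the cell that presently contains $y$. A straightforward induction on processing order then yields: each $V_k$ is connected; disjointness and $\bigcup_k V_k=F$ are automatic; and if $x$ ends up in $V_k$ then $d_F(x,x_k)\le\tau(x)\le r$ by the covering property, giving $V_k\subseteq B_r^F(x_k)$.

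The main obstacle is the inner containment $B_{r/2}^F(x_k)\subseteq V_k$, because a BFS-tree assignment could in principle route a vertex of the inner ball into a different cell. This is prevented by the \emph{strict} separation $d_F(x_i,x_j)>r$: if $d_F(z,x_k)=s\le r/2$ then for every other center $x_j$ one has $d_F(z,x_j)\ge d_F(x_j,x_k)-d_F(z,x_k)>r-s\ge s$, so $x_k$ is the strictly unique nearest center to $z$. Applying this to $x$ and to every vertex on a fixed shortest $x$-$x_k$ path, and inducting on $t=d_F(x,x_k)$, one sees that the construction is forced to pick predecessors inside the inner ball at every step, so $x\in V_k$ and the proof is complete.
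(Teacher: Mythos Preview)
Your proof is correct and follows essentially the same Vorono\"{i}-type strategy as the paper: reduce to the connected case, pick a maximal $r$-separated set of centers $\{x_1,\dots,x_m\}$ (so maximality gives the covering $\min_k d_F(x,x_k)\le r$), and partition $F$ into nearest-center cells, with the strict separation $d_F(x_i,x_j)>r$ forcing the inner-ball containment $B_{r/2}^F(x_k)\subseteq V_k$. The only difference is in bookkeeping: the paper writes down the cells by a closed-form nearest-center formula with index-based tie-breaking and then verifies connectedness by a star-shape argument along geodesics, whereas you grow the cells dynamically by BFS so that connectedness is immediate and the containments follow by your induction on $\tau$.
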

\begin{proof}
Since $F$ is the disjoint union of its connected components it suffices to consider the case that $F$ is connected.
Consider $\mathcal{A}$, the set of all finite sets $N \subseteq F$ so that the different points of $N$ have mutual distance strictly larger than $r$ in $F$. 
We denote the distance function in $F$ by $d_F$. Obviously 
$$d_F(x,y) \geq d(x,y).$$   
Pick a maximal set $M$ from $\mathcal{A}$, such a maximal set can be obtained in a constructive manner by induction. 

Let $M=\{x_1,\ldots ,x_m\}$, so that $m=|M|$.

It is clear that $$\bigcup_{x \in M} B_r(x) = F,$$
otherwise, there would be $x_0 \in F$ with $d_F(x_0,x) > r$ for any $x\in M$ 
and so $M \cup \{ x_0\} $ would belong to $\mathcal{A}$, contradicting maximality. 

Now we define Vorono\"{i} cells in the following way: 
For $i \in \{1,...,m\}$ let
\begin{eqnarray*}
V_i&=&\{x \in F  | d_F(x_i,x) \leq d_F(x_j,x) \text{ for all } j > i\\  
&&\hspace{1.5cm}\text{and~~}d_F(x_i,x) < d_F(x_j,x) \text{ for all } j < i\}. 
\end{eqnarray*}
Clearly $B_{\frac{r}{2}}(x_i) \subseteq V_i$, since $d_F(x_i,x_j)>r$ for $i\neq j$. We show that $V_i$ is  star-shaped with center $x_i$. In fact, let $x \in V_i$ and choose a path in $F$ like $x_i=y_0,y_1,...,y_l=x$ with $l=d_F(x_i,x)$. It is easy to see that $\{y_1,...,y_l\}$ are in $V_i$, 
proceeding by induction from $y_{l-1}$ to $y_1$. 
If $y_{l-1} \notin V_i$ then $d_F(x_i,y_{l-1}) > d_F(x_j,y_{l-1})$ 
would hold for $j > i$ or $d_F(x_i,y_{l-1})\ge d_F(x_j,y_{l-1})$ for $j < i$. 
In the first case then
$$d_F(x_i,x) = d_F(x_i,y_{l-1}) + 1> d_F(x_j,y_{l-1})+1 \geq d_F(x_j,x),$$ 
a contradiction. The same can be done for the second case. Obviously the $V_i$ are disjoint and $V_k \subseteq B_r(x_k)$. To prove this statement suppose that there is $x\in V_k$ and $x\notin B_r(x_k)$ then because  $F \subseteq \bigcup_{i=1}^{m} B_r(x_k)$ we have $x\in B_r(x_{k'}) $ and then according to the definition of $V_i$ we have
$$d_F(x,x_k) \leq d_F(x,x_{k'})\leq r,$$
a contradiction. So $V_k \subseteq B_r(x_k)$.
\end{proof}

For any graph $G$ and any $U \subseteq V$, such that the induced subgraph is connected, 
the Cheeger bound 
\begin{equation}
\label{eq:Cheeger}
\vartheta_U \ge \frac{1}{| U| \diam (U) } , 
\end{equation}
holds true.
For  special geometries, e.g., for cubes in the usual integer lattice, 
or balls in Cayley graphs, much better estimates hold.
This will be exploited in Section   \ref{s:Cayley} below.

\begin{proof}[of theorem~{\rm\ref{thmspectralcount}}]
By the polynomial growth bound we have $|U| \diam (U)\leq 2b_d r^{d+1}$ 
for $U$ is contained in some ball of radius $r$. We use the Cheeger inequality to conclude that for $c(\alpha)=\frac{\alpha^2}{42}$ all
$$
 E \leq \frac{c(\alpha)}{2b_d r^{d+1}} 
$$
satisfy 
\begin{equation}
\label{eq:Cheeger1}
E 
\leq \frac{c(\alpha)}{|U| \diam (U)} 
\leq c(\alpha) \, \vartheta_{U}
\end{equation}
thus a good choice for $r$ is 
\begin{equation}\label{Eandr}
r=\left(\frac{c(\alpha)}{2b_d}\right)^{\frac{1}{d+1}}E^{\frac{-1}{d+1}}.
\end{equation}
Now with $r$ as above we apply the previous lemma to $F_n$, 
and obtain $\{x_1, \dots,x_{m(n)}\}$ such that 
$$
d(x_k,x_l) > r   \text{~~for~~} k\neq l
$$
and disjoint, connected $V_1, \dots, V_{m(n)} \subseteq F_n$ such that 
$$
B_{\frac{r}{2}}^{F_n}(x_k) \subseteq V_k \subseteq B_{r}^{F_n}(x_k) .
$$
Note, first that the choice of $r$ implies
\begin{equation}
\label{eq:Cheeger-choice}
E\le c(\alpha)\vartheta_{V_k} ,
\end{equation}
so that we can apply theorem \ref{th1} to $V_k$. We next check that for most $k$ the volume of $V_k$ is not too small. In fact:
Order the  $x_k$ so that 
$|B_{\frac{r}{2}}^{F_n}(x_k)| \ge \eta  \left(\frac{r}{2}\right)^\delta $
for all $k \geq j(n)$.
By the density condition \eqref{density} we know that 
\begin{equation}\label{jn}
 \frac{j(n)}{| F_n|}\to 0\mbox{  for  }n\to\infty .
\end{equation}

As
$$
F_n=\bigcup_{k=1}^{m(n)}V_k
$$
is a disjoint union, lemma \ref{lemma1.2} gives that
$$
h^{F_n}_\omega\ge \bigoplus_{k=1}^{m(n)} h^{V_k}_\omega ,
$$
which in turn implies 
$$
\tr\chi_{(-\infty,E]}(h_\omega^{F_n})\le \sum_{k=1}^{m(n)}\tr\chi_{(-\infty,E]}(h_\omega^{V_k})
$$
The averaged normalized eigenvalue counting function
on $F_n$ then satisfies
\begin{eqnarray*}
\frac{1}{|F_n|} \bE\{\tr\chi_{(-\infty,E]}(h_\omega^{F_n})\}&\leq& \frac{1}{|F_n|} \sum_{k=1}^{m(n)}\bE\{\tr\chi_{(-\infty,E]}(h_\omega^{V_k})\}\\
&\leq& \frac{1}{|F_n|}\sum_{k=1}^{j(n)-1}\bE\{\tr\chi_{(-\infty,E]}(h_\omega^{V_k})\} \\
&& \hspace{0.5cm}+ \frac{1}{|F_n|}\sum_{k=j(n)}^{m(n)}\bE\{\tr\chi_{(-\infty,E]}(h_\omega^{V_k})\}.
\end{eqnarray*}
Using \eqref{jn} above, for each $\epsilon$ we can find $n$ large enough such that $\frac{j(n)-1}{|F_n|}\leq \epsilon$.
This yields
\begin{eqnarray*}
\limsup_{n \to \infty} \frac{1}{|F_n|}  \bE\{\tr\chi_{(-\infty,E]}(h_\omega^{F_n})\}\\
&&\hspace{-1cm} \leq \epsilon |V_k| + \lim_{n \to \infty} \frac{1}{|F_n|}\sum_{k=j(n)}^{m(n)}\bE\{\tr\chi_{(-\infty,E]}(h_\omega^{V_k})\}\\
&&\hspace{-1cm} \leq \epsilon |V_k| + \lim_{n \to \infty} \frac{1}{|F_n|}\sum_{k=j(n)}^{m(n)}|V_k|\bP\{ E_1(h^{V_k}_\omega)\le E\}.\\
&&\hspace{-1cm} \leq \epsilon |V_k| + \exp[-(1-p-\alpha)^2\nu]\\
\end{eqnarray*}
by theorem \ref{th1} (since \eqref{eq:Cheeger-choice} holds), where 
$$
\nu:=\min_{k=j(n), ..., m(n)} |V_k| .
$$
As $\epsilon$ was arbitrary, we get
\begin{equation}\label{limsup}
 \limsup_{n \to \infty} \frac{1}{|F_n|}  \bE\{\tr\chi_{(-\infty,E]}(h_\omega^{F_n})\} \le \exp[-(1-p-\alpha)^2\nu]
\end{equation}

We now estimate $\nu$ below in terms of $E$, making use of the $\delta$-dimensional density $\eta$ property of the sequence $(F_n)$ 
and the relation between $E$ and $r$, \eqref{Eandr} above:
\begin{eqnarray}
|V_k| &\geq& |B^{F_n}_{r / 2}(x_k)| \nonumber \\ 
&\geq& \eta\left(\frac{r}{2}\right)^\delta \nonumber \\ 
&=&\eta 2^{-\delta}\left(\frac{c(\alpha)}{2b_d}\right)^{\frac{\delta}{d+1}}E^{\frac{-\delta}{d+1}} . \label{eqn:poly}
\end{eqnarray}
Inserting this lower bound for $\nu$ into \eqref{limsup} above gives the assertion of the theorem with the choice
$$
\gamma(\alpha)= \eta 2^{-\delta} \left(\frac{\alpha^2}{84b_d}\right)^{\frac{\delta}{d+1}}(1-p-\alpha)^2 .
$$
\end{proof}
\section{Application to Cayley graphs}
\label{s:Cayley}
In this section we give an application of our above result to Cayley graphs, the actual starting point of 
the present paper. For more background we refer to \cite{AntunovicV-a08,AntunovicV-b08,AntunovicV-09}. Among the results there is the Lifshitz asymptotics for the Dirichlet Laplacian on percolation subgraphs of a given Cayley graph. Since we restrict ourselves to groups of polynomial growth, lower bounds can easily be derived by trial functions. The method of proof for the upper bound that was used in \cite{AntunovicV-09} only works in the subcritical phase, i.e., when $p$ is small enough, so that almost surely no infinite cluster occurs. That means that, almost surely, the Dirichlet Laplacian decomposes into a direct sum of finite rank operators and the upper bound is achieved by probability estimates on the cluster size. Our method does not feel the difference between the subcritical and supercritical phase; thus we can extend the upper bound for the integrated density of states for the whole range $p\in (0,1)$ in much the same way in which in \cite{MuSt07} the results from \cite{KirschM-06} had been 
extended for the lattice case.  
 
Suppose that $G$ is a finitely generated group and $S$ is a symmetric generating set not containing the identity element. 
We define the corresponding Cayley graph by $V(G):=G$ and $E(G):= \{\{g,gs\} \mid g\in G, s \in S\}$. 
This is a regular graph of degree $|S|$; note that $G$ stands  for the group and the graph at the same time, consistent with the common conventions.

Let us first comment on the consequences of a polynomial growth bound: 
Due to results of Bass \cite{Bass-72}, Gromov \cite{Gromov-81} and van den Dries and Wilkie \cite{vandenDriesW-84}, 
a polynomial bound like (3.5) above implies that $d\in\NN$, and that an analogous lower bound holds with the same exponent $d$. 
From now on, we suppose that $G$ is of polynomial growth.

Therefore, $G$ is also amenable. In fact, the sequence of balls with common center $(B_k)_{k\in\NN}$ forms a
F\o{}lner sequence. Passing to an appropriate subsequence $ (k_n)_{n\in\NN}$ of radii, 
one obtains even a tempered F\o{}lner sequence $\cF=(F_n)_{n\in\NN}$.

Due to ergodicity, cf.~\cite{Veselic-b05}, in that case
$$
\bar{N}_{\cF}(E,h)=\lim_n \frac{1}{| F_n|}\bE\{\tr\chi_{(-\infty,E]}(h_\omega^{F_n})\}
$$
and the Pastur-Shubin trace formula tells us that
$$
\bar{N}_{\cF}(E,h)=\bE\{\tr[\delta_g \chi_{(-\infty,E]}(h_\omega)\delta_g ]\}=:N(E,h),
$$
in particular, the limit is independent of the sequence. The same holds true for $\bar{N}_{\cF}(E,H)$ with the obvious notational changes. We get

\begin{theorem}\label{coro1}
Let $G$ be a Cayley graph of a finitely generated group of polynomial growth, with growth exponent $d$, and let $p\in (0,1)$. 
Then there are $\gamma, \tilde \gamma>0$ such that for $E$ small enough:
$$ 
\exp[-\tilde \gamma E^{-\frac{d}{2}}] \leq N(E,H)\le N(E,h)\le \exp[-\gamma E^{-\frac{d}{2}}] .
$$ 
\end{theorem}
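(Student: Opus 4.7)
Both inequalities will be derived by specializing the machinery of Section 3 and complementing it with a trial function lower bound, the key improvement being the use of sharper spectral-gap estimates available on Cayley graphs of polynomial growth.

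For the upper bound, the plan is first to exploit amenability. Groups of polynomial growth admit a tempered F\o{}lner sequence of balls $\cF = (F_n)_n = (B_{k_n})_n$, so that ergodicity and the Pastur--Shubin trace formula give $\bar{N}_\cF(E,h) = N(E,h)$. One then revisits the proof of Theorem \ref{thmspectralcount} with the generic Cheeger bound $\vartheta_U \geq 1/(|U|\diam U)$ replaced by the stronger Poincar\'e-type estimate
$$\vartheta_U \;\geq\; \frac{c(G)}{r^2},$$
valid for any connected $U$ sandwiched between $B^{F}_{r/2}(x)$ and $B^{F}_r(x)$ in a Cayley graph of polynomial growth (such as the Voronoi cells produced by Proposition \ref{propo}). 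This is a consequence of the Coulhon--Saloff-Coste Poincar\'e inequality together with volume doubling. With this sharper bound, the optimal radius in \eqref{Eandr} becomes $r \sim E^{-1/2}$ rather than $E^{-1/(d+1)}$, and the subsequent volume estimate $|V_k| \geq \eta r^d$ (with $\delta = d$) then produces the Lifshitz exponent $d/2$.

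For the lower bound on $N(E,H)$, I would set $r := \lceil c_0 E^{-1/2}\rceil$ with $c_0$ chosen so that $C_0/r^2 \leq E$, where $C_0$ is the Rayleigh-quotient bound below. For each $g \in F_n$ with $B_{r+1}(g) \subset F_n$, introduce the event
$$A_g := \{\omega_x = 1 \text{ for every } x \in B_{r+1}(g)\},$$
of probability $p^{|B_{r+1}|} \geq \exp[-\tilde C r^d]$. On $A_g$ the tent function $\phi_g(x) := \max(0, 1 - d(g,x)/r)$, supported on $B_r(g)$, satisfies $\langle \phi_g, H^{F_n}_\omega \phi_g\rangle/\|\phi_g\|^2 \leq C_0/r^2$, since the potential $(D_G - D_{G_\omega})\bid_{F_n}$ vanishes on $B_r(g)$ under $A_g$ and the Dirichlet form of a tent function is controlled by volume doubling. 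One then packs $F_n$ with at least $c|F_n|/|B_r|$ disjoint balls $B_r(g_k)$, yielding pairwise orthogonal trial functions. By the min--max principle,
$$\tr \chi_{(-\infty,E]}(H^{F_n}_\omega) \;\geq\; \#\{k : A_{g_k} \text{ occurs}\}.$$
Taking expectations, dividing by $|F_n|$, passing to the limit, and invoking the Pastur--Shubin identification then gives $N(E,H) \geq c\, r^{-d} \exp[-\tilde C r^d] \geq \exp[-\tilde\gamma E^{-d/2}]$ for small $E$.

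The main obstacle lies in the upper bound: establishing the Poincar\'e-type spectral gap $\vartheta_{V_k} \geq c(G)/r^2$ uniformly for the Voronoi cells $V_k$, which are only star-shaped and not genuine balls. This rests on a quantitative Poincar\'e inequality that is robust under the mild geometric deformation from a ball to a sandwiched connected subset, and is where polynomial growth (equivalently, by Gromov's theorem, virtual nilpotence) enters essentially; once it is in hand, both bounds in the theorem follow by the routes sketched above.
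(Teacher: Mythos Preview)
Your overall strategy matches the paper's: revisit the proof of Theorem~\ref{thmspectralcount} with the crude Cheeger bound \eqref{eq:Cheeger} replaced by a quadratic spectral-gap estimate $\vartheta_U \ge c/(\diam U)^2$, so that the optimal scale becomes $r\sim E^{-1/2}$ and, with $\delta=d$, the exponent $d/2$ falls out.

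Where you diverge is in the \emph{source} of that quadratic bound. You invoke the Coulhon--Saloff-Coste Poincar\'e inequality plus volume doubling, and then flag as the ``main obstacle'' its robustness under the passage from balls to the star-shaped Vorono\"{\i} cells $V_k$. The paper instead uses the isoperimetric inequality $|A|\le (1+\diam A)\,|\partial A|$ for finite subsets of vertex-transitive graphs (Babai--Szegedy, \.{Z}uk) and feeds it into Cheeger's inequality to obtain $\vartheta_V \ge \mathrm{const}/(1+\diam V)^2$ directly for \emph{any} connected finite $V$. This route is lighter and, crucially, bypasses your obstacle entirely: no special geometry of $V$ beyond connectedness is needed, so the Vorono\"{\i} cells are covered automatically. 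Your Poincar\'e approach can certainly be made to work (sandwiching $B^F_{r/2}\subset V_k\subset B^F_r$ and volume doubling suffice), but it is more machinery than necessary here.

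For the lower bound, the paper does not argue at all: it simply cites Theorem~13 and Remark~15 of \cite{AntunovicV-b08}. Your tent-function/packing argument is the standard one underlying that citation and is correct as sketched; it has the merit of being self-contained.
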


To prove the upper bound, one notes that for Cayley graphs of polynomial growth
a better isoperimetric inequality holds than in general graphs, see e.g.~\cite{BabaiSzegedy-92} or \cite{Zu00}, namely
\[
\vert A\vert\leq (1+ \diam  (A)) \vert \partial A\vert\  .
\]
This implies, via Cheeger's constant, cf.~\cite{Ch97}, a lower bound on the spectral gap, $\vartheta_{V}$. 
It reads 
$$
\vartheta_{V}\ge\frac{const.}{(1+\diam  V)^2} 
$$
where the constant depends only on the group and the chosen set of generators.
Using this in the proof of theorem 3.5 (instead of inequality (\ref{eq:Cheeger})) above, we get the asserted upper bound. 
The lower bound refers to the integrated density of states of the percolation Hamiltonian 
defined in \eqref{eq:percolationHamiltonian}. In this setting, the lower bound holds for all values of $p\in (0,1)$:
there is no difference between the percolating and the non-percolating phase, cf.~theorem 13 and Remark 15 in \cite{AntunovicV-b08}.

The above theorem implies a Lifshitz-type double logarithm asymptotics in the sense that 
\begin{equation*}
\lim_{E\searrow 0} \frac{\ln |\ln N(E,H)|}{|\ln E|} = \frac{d}{2} \text{ and }
\lim_{E\searrow 0} \frac{\ln |\ln N(E,h)|}{|\ln E|} = \frac{d}{2}.
\end{equation*}
For the existence and more details on critical probabilities for Cayley graphs, we refer to \cite{AntunovicV-a08}.
\\[1em]
\begin{acknowledgements}\label{ackref}
 The authors are grateful to Francisco Hoecker-Escuti for useful discussions.
\end{acknowledgements}

\affiliationone{
   R. Samavat, P. Stollmann and I. Veseli\'c\\
   Faculty of Mathematics\\
   Technische Universität Chemnitz\\
   09107 Chemnitz\\
   Germany
   \email{reza.samavat@mathematik.tu-chemnitz.de\\
   peter.stollmann@mathematik.tu-chemnitz.de}
   }

\begin{thebibliography}{99}
%
%
\bibitem{AntunovicV-a08}
{\bibname T. Antunovi\'c \and I. Veseli\'c},
`Sharpness of the phase transition and exponential decay of the
  subcritical cluster size for percolation on quasi-transitive graphs',
 {\em J. Statist. Phys.}, 130(5) (2008) 983--1009.

\bibitem{AntunovicV-b08} 
{\bibname T. Antunovi\'{c} \and I. Veseli\'{c}}, 
`Spectral asymptotics of percolation Hamiltonians in amenable Cayley graphs',
{\em Operator Theory: Advances and Applications.} Vol 186, (2008) 1--26.

\bibitem{AntunovicV-09}
{\bibname T. Antunovi\'c \and I. Veseli\'c},
`Equality of {L}ifshitz and van {H}ove exponents on amenable {C}ayley
  graphs',
{\em J. Math. Pures Appl. (9)}, 92(4) (2009) 342--362.

\bibitem{BabaiSzegedy-92}
{\bibname L. Babai \and M. Szegedy}, 
`Local Expansion of Symmetrical Graphs',
{\em Combinatorics, Probability and Computing}, 1(1):(1992) 1--11.

\bibitem{Bass-72}
{\bibname H. Bass},
`The degree of polynomial growth of finitely generated nilpotent groups',
{\em Proc. London Math. Soc.}, 25 (1972) 603--614.

\bibitem{Bo01}
{\bibname   B. Bollob\'{a}s},
`Random graphs',
{\em Cambridge University Press.} Cambridge (2001).

\bibitem{BroadbentH-57}
{\bibname S. R. Broadbent and J. M. Hammersley},
`Percolation processes i. crystals and mazes',
{\em Mathematical Proceedings of the Cambridge Philosophical
  Society}, 53 (1957) 629--641.

\bibitem{ChayesCFST-86}
{\bibname J. T. Chayes, L. Chayes, J. R. Franz, J. P. Sethna \and S. A. Trugman},
'On the density of states for the quantum percolation problem',
{\em J. Phys. A}, 19(18) (1986) L1173--L1177.

\bibitem{Chernoff-52}
{\bibname H. Chernoff},
`A Measure of Asymptotic Efficiency for Tests of a Hypothesis Based on the sum of Observations',
{\em The Annals of Mathematical Statistics}, 23(4) (1952) 493--507.

\bibitem{Ch97}
{\bibname F.R.K. Chung}, 
`Spectral graph theory', 
{\em American Mathematical Society.}
Providence, RI, (1997). 

\bibitem{vandenDriesW-84}
{\bibname L. van den Dries \and A. Wilkie},
`Gromov's theorem on groups of polynomial growth and elementary logic',
{\em J. Algebra}, 89 (1984) 349--374.

\bibitem{Gartnerk-05}
{\bibname J. Gartner \and W. K\"onig}, 
`The parabolic Anderson model', 
{\em Interacting Stochastic Systems.} J.-D. Deuschel and A.
Greven (Eds.), (2005) 153--179.
 

\bibitem{deGennesLM-a59}
{\bibname P. G. de Gennes, P. Lafore \and J. Millot},
`Amas accidentels dans les solutions solides d\'esordonn\'ees'
{\em J. of Phys. and Chem. of Solids}, 11(1--2) (1959) 105--110.

\bibitem{deGennesLM-b59}
{\bibname P. G. de Gennes, P. Lafore \and J. Millot},
`Sur un ph\'enom\`ene de propagation dans un milieu d\'esordonn\'e',
 {\em J. Phys. Rad.}, (1959) 20:624.

\bibitem{Gr99}
{\bibname G. Grimmett}, 
`Percolation', 
{\em Springer Berlin,} 2nd ed., (1999).

\bibitem{Gromov-81}
{\bibname M. Gromov},
`Groups of polynomial growth and expanding maps',
 {\em Inst. Hautes \'Etudes Sci. Publ. Math.}, 53 (1981) 53--73.


\bibitem{Hammersley-57}
{\bibname J. M. Hammersley},
`Percolation processes: Lower bounds for the critical probability',
 {\em The Annals of Mathematical Statistics}, 28(3) (1957) 790--795.

\bibitem{Hoeffding-63}
{\bibname W. Hoeffding},
`Probability inequalities for sums of bounded random variables', 
 {\em Journal of the American Statistical Association}, 58(301) (1963) 13--30.

\bibitem{KantelhardtB-02}
{\bibname J. W. Kantelhardt \and A. Bunde},
`Sublocalization, superlocalization, and violation of standard single-parameter scaling in the {Anderson} model',
 {\em Phys. Rev. B}, 66, (2002)
 http://link.aps.org/abstract/PRB/v66/e035118.

\bibitem{KirkpatrickE-72}
{\bibname S. Kirkpatrick \and T. P. Eggarter},
`Localized states of a binary alloy',
 {\em Phys. Rev. B}, 6 (1972) 3598.

\bibitem{Ki85}
{\bibname W. Kirsch}, 
`Random Schr\"odinger operators and the density of states',
{\em Stochastic aspects of classical and quantum systems} (Marseille,
 1983),
  Lecture Notes in Math., 1109, Springer, (1985) 68--102.

\bibitem{KiMe07}
{\bibname W. Kirsch \and B. Metzger},
`The integrated density of states for random Schr\"odinger  operators',
{\em Proc. Sympos. Pure Math.}, (76), Part 2,  Amer. Math. Soc., Providence,
RI, (2007) 649--696.

\bibitem{KirschM-06}
{\bibname W.~Kirsch \and P.~M\"uller},
`Spectral properties of the {Laplacian} on bond-percolation graphs',
{\em Math. Zeit.}, 252 (4) (2006) 899--916.
http://www.arXiv.org/abs/math-ph/0407047.

\bibitem{Lif64}
{\bibname I. M. Lifshitz}, 
`The energy spectrum of disordered systems',
 {\em Adv. Phys.} (13) (1964) 483--536.

\bibitem{Lif65}
{\bibname I. M. Lifshitz}, 
`Energy spectrum structure and quantum states of disordered condensed systems',  
 {\em Sov. Phys. Usp.} (7) (1965), 549--573. 

\bibitem{MuSt07} 
{\bibname P. M\"uller \and P. Stollmann}, 
`Spectral asymptotics of the Laplacian on super-critical
bond-percolation graphs', 
 {\em J. Funct. Anal. } 252, (2007) 233--246.

\bibitem{MuSt11} 
{\bibname P. M\"uller \and P. Stollmann}, 
`Percolation Hamiltonians',   
 {\em Random Walks, Boundaries and Spectra,} edited by D. Lenz, F. Sobieczky and W. Woess,
{\em Progress in Probability},  (64) Springer, Basel, (2011) 235--258. 

\bibitem{Pansu-83}
{\bibname P. Pansu},
`Croissance des boules et des g\'eod\'esiques ferm\'ees dans les
  nilvari\'et\'es',
 {\em Ergodic Theory Dynam. Systems}, 3(3) (1983) 415--445.

\bibitem{Pa72} 
{\bibname L.A. Pastur},
'On the distribution of the eigenvalues of the Schr\"odinger equation
with a random potential', {\em Funct. Anal. Appl.}, (6) (1972) 163--165.

\bibitem{Simon-b85}
{\bibname B. Simon},
`Lifschitz tails for the {Anderson} model',
 {\em J. Stat. Phys.}, (38) (1985) 65--76.


\bibitem{St93} 
{\bibname P. Stollmann},
`Closed Ideals in Dirichlet Spaces',
 {\em Potential Analysis}, (2) (1993) 263--268. 

\bibitem{St99} 
{\bibname P. Stollmann},
`Lifshitz asymptotics via linear coupling of disorder',
 {\em Math. Phys. Anal. Geom.} (2), (1999) 279--289.

\bibitem{St01} 
{\bibname P. Stollmann},
`Caught by disorder: bound states in random media', 
{\em Progress in Math. Phys.}, vol 20, Birkhäuser, Boston (2001).

\bibitem{Veselic-a05}
{\bibname I. Veseli\'c},
`Quantum site percolation on amenable graphs',
In {\em Proceedings of the Conference on Applied Mathematics and
  Scientific Computing}, Springer (2005) 317--328.

\bibitem{Veselic-b05}
{\bibname I. Veseli\'c},
`Spectral analysis of percolation Hamiltonians'.
 {\em Math.~Ann.}, 331(4) (2005) 841--865.

\bibitem{Zu00} 
{\bibname A. Zuk}, 
`On an isoperimetric inequality for infinite finitely generated groups',
 {\em Topology,} 39 no. 5 (2000)  947--956. 


\end{thebibliography}
\end{document}